\documentclass[journal,onecolumn,draftcls,12pt,twoside]{IEEEtran}

\usepackage{graphicx, cite, amsthm, float, dsfont, amsmath, amsfonts, amssymb, enumitem, pgf, tikz, xcolor}

\usepackage{pgfplots}
 
\newtheorem{theorem}{Theorem}
\newtheorem{lemma}{Lemma}

\newtheorem{definition}{Definition} 
    
\newtheorem{remark}{Remark} 

\makeatletter   
\newcommand{\vast}{\bBigg@{3}} 
\newcommand{\Vast}{\bBigg@{4}} 
\makeatother
 
\newcommand{\bs}[1]{\boldsymbol{#1}}
\newcommand{\mcl}[1]{\mathcal{#1}}

\newcommand{\I}{\mathrm{I}}

\newcommand{\Whe}{\widehat{W}_{\varepsilon}}

\definecolor{mdgreen}{RGB}{0,120,60}
\newcommand{\md}[1]{{\color{black}#1}}
\newcommand{\pooya}[1]{{\color{black}#1}}

\begin{document}  
 
\title{The Expurgated Error Exponent is \\Not Universally Achievable}

\author{Seyed AmirPouya Moeini, Marco Dalai and Albert Guill\'en i F\`abregas
\thanks{S. A. Moeini is with the Department of Engineering, University
of Cambridge, CB2 1PZ Cambridge, U.K. (e-mail: sam297@cam.ac.uk). 
Marco Dalai is with the Department of Information Engineering, University
of Brescia, 25123 Brescia, Italy (e-mail: marco.dalai@unibs.it).
A. Guill\'en i F\`abregas is with the Department of Engineering, University
of Cambridge, CB2 1PZ Cambridge, U.K., the Department of
Signal Theory and Communications and the Institute of Mathematics (IMTech), Universitat Polit\`ecnica de Catalunya
08034 Barcelona, Spain  (e-mail: guillen@ieee.org).
}
\thanks{This research was supported in part by the European Research Council under Grants 101142747 and 101158232, the Spanish Government under Grants PID2020-116683GB-C22 and PID2021-128373OB-I00 and the Royal Society International Exchange Grant IES\textbackslash R2\textbackslash 232296.}
\thanks{This work was presented in part at the 2026 International Z\"urich Seminar on Information and Communication, February 2026.}}

 
\maketitle

\begin{abstract}
	We study the universal attainability of the expurgated error exponent for discrete memoryless channels (DMCs). 
	While the random-coding exponent is known to be universally attainable \pooya{for every fixed input distribution}
	 via maximum mutual information (MMI) decoding for DMCs, it remains open whether the expurgated exponent can be attained universally. 
	We show that this is not the case in general. Specifically, we construct a family of DMCs for which no single sequence of codes can attain the expurgated exponent simultaneously for all channels in the family, even at rate zero. 
In addition, for the same channel family, we show that MMI decoding fails to achieve the expurgated exponent for any channel in the family. 
\end{abstract}

\begin{IEEEkeywords}
Error exponents, expurgated bound, MMI decoder, universal decoding.
\end{IEEEkeywords}

\IEEEpeerreviewmaketitle

\section{Introduction}

The random-coding error exponent for discrete memoryless channels (DMC) is the exponential rate of decay of the ensemble average probability of error, averaged over the ensemble of randomly generated codes \cite{gallager, csiszar1977new}. 
It is positive for all rates below capacity \cite{gallager, csiszar1977new}. 
For high rates, the random-coding exponent coincides with the channel reliability function \cite{SHANNON196765}. 
At low rates, the ensemble average error probability is dominated by atypical codes whose codewords are very close to one another.
By removing the worst fraction of codewords from codes in the ensemble \cite{gallager, csiszar1977new, 1056281}, one obtains the so-called expurgated error exponent. The expurgated exponent improves the random-coding exponent at low rates and coincides with the channel reliability function at rate zero \cite{SHANNON196765, SHANNON1967522, 9628087}.

Maximum-likelihood (ML) decoding minimizes the error probability and requires perfect knowledge of the channel. Thus, in order to attaining either the random-coding or expurgated exponents one generally needs to know the channel. 
In \cite[Ch.~10]{csiszarkorner}, Csisz\'ar and K\"orner showed that {for every fixed input distribution} the random-coding exponent is universally attainable for all DMCs. 
Their construction uses the same codebook satisfying the packing lemma \cite[Lem.~10.1]{csiszarkorner} for every DMC, together with maximum mutual information (MMI) decoding, which selects the message whose codeword maximizes the empirical mutual information with the received sequence. The MMI decoding rule does not depend on the channel transition probabilities. 
Since MMI decoding attains the random-coding exponent, this naturally motivates the study of the reliability function under MMI decoding and the question of whether this decoder (or any other universal decoder ignorant of the channel) can achieve higher exponents at lower rates. 
In \cite[p. 10]{1056281}, Csiszár and Körner claimed that the expurgated error exponent is universally attainable for the binary symmetric channel with MMI decoding. 
To the best of our knowledge, whether the expurgated exponent is universally attainable for general DMCs remains open, as noted in \cite[p. 6]{1056281}.
In this direction, Tamir and Merhav \cite{9656768} studied the error exponent of stochastic mutual information (SMI) decoding, the stochastic version of MMI decoding. 
In \cite[Thm. 2]{9656768} it is stated that SMI decoding attains the  expurgated exponent for all DMCs and conjectured that MMI also achieves the expurgated error exponent.
 
In this work, we address the above questions and show that neither holds in general.
We first introduce a 
 family of DMCs for which no single sequence of codes can simultaneously attain the expurgated exponent for all channels in the family, 
thereby showing that this exponent is not universally attainable for all DMCs, in contrast to the random-coding exponent.
We further show that, MMI decoding fails to attain the expurgated exponent for any channel in the same family, seemingly contradicting the generality of Tamir and Merhav's result \cite{9656768} for all DMCs.

These findings do not rule out the possibility that a certain universal decoder may still achieve the expurgated exponent for certain families of channels.
In fact, in the Appendix we show that, for the binary symmetric channel, MMI attains the expurgated exponent at rate zero. It
was claimed in \cite{1056281} that this holds for all rates, although no proof was provided.
It is therefore of independent interest to characterize the family of channels for which MMI can achieve the expurgated exponent.

The remainder of the paper is organized as follows. Section \ref{sec:prem} introduces the preliminaries and the notation used throughout the paper.
In Section \ref{sec:univ}, we construct a family of DMCs for which no single sequence of codes can attain the expurgated error exponent simultaneously for all channels in the family. 
In Section~\ref{sec:mmi}, we further show that MMI decoding fails to attain the expurgated exponent for any channel in the same family.

\section{Preliminaries}\label{sec:prem}

Scalar random variables are denoted by uppercase letters, their sample values by lowercase letters, and their alphabets by calligraphic letters. Random vectors are represented in boldface.
The type of a sequence $\bs{x}=\left(x_1, \ldots, x_n\right) \in \mcl{X}^n$ is its empirical distribution, defined by
\begin{align}
    \hat{P}_{\bs{x}}(x) \triangleq \frac{1}{n} \sum_{i=1}^n \mathds{1}\!\left\{{x}_i=x\right\}  \,\, \text{ for } x\in \mcl{X}.
\end{align}
The joint type of sequences $\bs{x}=\left(x_1, \ldots, x_n\right) \in \mcl{X}^n,\bs{y}=\left(y_1, \ldots, y_n\right) \in \mcl{Y}^n$, denoted by $\hat{P}_{\bs{x}\bs{y}}(x,y)$ is defined analogously.
For a joint distribution $P$ on $(X,Y)$, we write $I(P)$ to denote the mutual information induced by $P$ and $H_P(X), H_P(X|Y)$ the corresponding marginal and conditional entropies. 

We consider the communication of $M$ equiprobable messages over a DMC $W$,  whose $n$-letter transition probability is
 $W^n(\bs{y}|\bs{x}) = \prod_{i=1}^n W(y_i | x_i)$, for $\bs{x}=(x_1, \ldots, x_n) \in \mcl{X}^n$ and $\bs{y}=(y_1, \ldots, y_n) \in \mcl{Y}^n$,
where $\mcl{X}$ and $\mcl{Y}$ are discrete alphabets with cardinalities $|\mcl{X}|$ and $|\mcl{Y}|$, respectively.
A code of blocklength $n$ is a pair $(f,\phi)$, where the encoder $f$ maps each message to a codeword in $\mcl{X}^n$ and the decoder $\phi$ maps each observed sequence in $\mcl{Y}^n$ to a message.
The rate of such a code is $R=\frac{1}{n}\log M$.

For a given channel $W$ and code $(f,\phi)$, let $p_{e,m}(W,f,\phi)$ denote the error probability when message $m$ is sent, that is, $p_{e,m}(W,f,\phi) = \mathbb{P}\big[\phi(\bs{Y}) \neq m \,|\, m \text{ is sent}\big]$.
The maximal error probability is $p_{{e},\mathrm{max}}(W, f, \phi)\triangleq \max_m p_{{e},m}(W, f, \phi)$,
and the average error probability is
$p_{{e}}(W, f, \phi)\triangleq\frac{1}{M}\sum_{m=1}^M p_{{e},m}(W, f, \phi)$.
For a given codebook $\mcl{C}=\left\{\bs{x}_1,\ldots,\bs{x}_M\right\}$, the ML decoder is optimal in the sense that it minimizes the average error probability.
Given the channel output $\bs{y}$, it selects the message
\begin{align}
	\widehat{m} = \underset{m\in \{1,\ldots,M\}}{\operatorname{argmax}} W^n(\bs{y}|\bs{x}_m),
\end{align}
and therefore requires perfect knowledge of the channel transition probabilities.
The MMI decoder selects the message $\widehat{m}$ that maximizes the mutual information induced by the joint type of $\bs{y}$ and each codeword,
\begin{align}
\widehat{m} = \underset{m\in \{1,\ldots,M\}}{\operatorname{argmax}}\, I(\hat{P}_{\bs{x}_m\bs{y}}).
\label{eq:mmi_dec}
\end{align}
Observe that the computation of the mutual information in \eqref{eq:mmi_dec} does not involve the knowledge of the channel statistics.

\pooya{For a given rate $R$ and a channel $W$, an error exponent $E$ is said to be \emph{achievable} if there exists a sequence of codes $(f_n,\phi_n)$ with $M_n$ messages such that}
\begin{align}
	\pooya{\liminf_{n\rightarrow \infty} \frac{1}{n}\log M_n \geq R, }\, \quad
	\pooya{\liminf_{n \rightarrow \infty} -\frac{1}{n}\log p_{{e}}(W, f_n, \phi_n) \geq E.}
\end{align}

Achievability can similarly be stated in terms of the maximal error probability rather than the average error probability. 
Two well-known achievable error exponents are the random-coding exponent and the expurgated exponent (see \cite[Ch.~5]{gallager}).
The reliability function of a channel $W$ is defined as the supremum of all achievable error exponents at rate $R$.
A complete characterization of this function is still unknown,  \pooya{but coincides with the random-coding exponent at high rates \cite{SHANNON196765} and with the expurgated exponent in the limit as $R\downarrow0$
\cite{SHANNON1967522,9687534,9628087}.}

\md{For a fixed input distribution $Q$ on $\mathcal{X}$, the \pooya{constant-composition} random-coding exponent \cite{csiszarkorner} is
\begin{align}
E_{\mathrm{r}}(R,Q,W)
	\triangleq \min_{V}\,
	D(V\Vert W\mid Q)+\left[I(Q,V)-R\right]^+,
	\label{eq:random-coding-ck}
\end{align}
where $[t]^+\triangleq \max\{0,t\}$ and $D(V\Vert W| Q)\triangleq\sum_xQ(x)D(V(\cdot|x)\Vert W(\cdot|x))$. 
The \pooya{constant-composition} expurgated exponent \pooya{for a fixed input distribution $Q$} is
\begin{align}
	E_{\mathrm{ex}}(R,Q,W)
	\triangleq
	\min_{\substack{P_{X\bar{X}}:\,P_X=P_{\bar{X}}=Q\\
	I_P(X;\bar{X})\leq R}}
	\,\mathbb{E}_{P}\!\left[d_W(X,\bar{X})\right]
	+I_P(X;\bar{X})-R.
	\label{eq:expurgated-ck}
\end{align}
where $d_W(x,\bar{x})\triangleq-\log\sum_y\sqrt{W(y|x)W(y|\bar{x})}$ \pooya{is the Bhattacharyya distance}.
In particular,
\begin{align}\label{eq:rate-zero-expurgated-exponent}
	\lim_{R\downarrow0}E_{\mathrm{ex}}(R,Q,W)
	=\sum_{(x,\bar{x})}Q(x)Q(\bar{x})d_W(x,\bar{x}).
\end{align}
When the input distribution is optimized, we write 
$E_{\mathrm{ex}}(R,W)\triangleq\max_Q E_{\mathrm{ex}}(R,Q,W)$.}
\pooya{
Observe that the constant-composition expressions in \eqref{eq:random-coding-ck} and \eqref{eq:expurgated-ck} are given in the primal domain, 
whereas the corresponding i.i.d. random-coding and expurgated exponents were derived by Gallager in the dual domain \cite[Ch.~5]{gallager}.
For each exponent, the constant-composition expression is at least as large as its i.i.d. counterpart for every fixed $Q$, but the two coincide after optimization over $Q$ \cite{csiszarkorner}.
}

We now introduce the definition of a universally attainable error exponent as in \cite[Def.~10.7]{csiszarkorner}.

\md{
\begin{definition}\label{defn: universal}
	A function $E^*(W)$ of $W$ is a \emph{universally attainable error exponent} at rate $R>0$ for the family of DMCs $\{W:\mathcal{X}\to\mathcal{Y}\}$ if, for every $\delta>0$
	 there exists an integer $n_0(|\mathcal{X}|,|\mathcal{Y}|,R,\delta)$ such that for all
	$n\geq n_0$, there exist $n$-length block codes $(f_{\pooya{n}},\phi_{\pooya{n}})$ of rate at least $R-\delta$ such that
	\begin{align}
		p_{e,\max}(W,f_n,\phi_n)
		\leq \exp\!\left[-n\big(E^*(W)-\delta\big)\right]
	\end{align}
	for every DMC $\{W:\mathcal{X}\to\mathcal{Y}\}$.
\end{definition}

{
This is a strong notion of universality, as it requires the same sequence of encoder--decoder pairs $(f_n,\phi_n)$ to attain the exponent for every channel in the family.
If an input distribution $Q$ is prescribed, we write the channel function as $W \mapsto E(R,Q,W)$.
The code may depend on $R,Q,\delta$, and $n$, but not on $W$.
Other definitions of universality have been studied in the literature; see, for example, \cite{705540} and the references therein.
}


\pooya{The random-coding exponent is known to satisfy Definition \ref{defn: universal}.}
\begin{theorem}[\!\!{\cite[Thm.~10.8]{csiszarkorner}}]\label{thm:random-coding}
	For every fixed distribution $Q$ on $\mathcal{X}$, the random-coding exponent $E_{\mathrm{r}}(R,Q,W)$ is universally attainable at rate $R$.
	\label{thm:ck-random-universal} 
\end{theorem}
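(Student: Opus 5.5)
The plan is to reproduce the Csisz\'ar--K\"orner argument: a packing-lemma codebook of constant composition, decoded by MMI.

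\textbf{Step 1: the code.} Fix $Q$, $R$ and $\delta$. For large $n$, let $Q_n$ be an $n$-type with $Q_n\to Q$. Invoke the packing lemma \cite[Lem.~10.1]{csiszarkorner}. It gives a codebook $\mathcal{C}=\{\bs{x}_1,\dots,\bs{x}_M\}\subset T_{Q_n}$ with $M\ge\exp(n(R-\delta))$ such that, for every pair of conditional types $V,\hat V$ and every $m$,
\begin{align}
\Big|T_{V}(\bs{x}_m)\cap\bigcup_{m'\ne m}T_{\hat V}(\bs{x}_{m'})\Big|\le |T_V(\bs{x}_m)|\exp\!\big(-n[I(Q_n,\hat V)-R]^+\big)\,e^{n\delta/2}.
\end{align}
The codebook is built without reference to $W$. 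The decoder is the MMI rule \eqref{eq:mmi_dec}, with ties counted as errors, so it too ignores $W$.

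\textbf{Step 2: per-message error bound.} Suppose $\bs{x}_m$ is sent and $\bs{y}\in T_V(\bs{x}_m)$. An error requires some $m'\ne m$ with $\bs{y}\in T_{\hat V}(\bs{x}_{m'})$ and $I(Q_n,\hat V)\ge I(Q_n,V)$. Every sequence in $T_V(\bs{x}_m)$ has the same probability $\exp(-n[D(V\|W|Q_n)+H(V|Q_n)])$, and $|T_V(\bs{x}_m)|\le \exp(nH(V|Q_n))$. Combining these with the packing inequality, the contribution of the pair $(V,\hat V)$ is at most
\begin{align}
\exp\!\big(-n\big[D(V\|W|Q_n)+[I(Q_n,\hat V)-R]^+-\delta/2\big]\big)\le\exp\!\big(-n\big[D(V\|W|Q_n)+[I(Q_n,V)-R]^+-\delta/2\big]\big),
\end{align}
where the last inequality uses the MMI condition $I(Q_n,\hat V)\ge I(Q_n,V)$. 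The contribution is also trivially at most $\exp(-nD(V\|W|Q_n))$, which is consistent with the $[\cdot]^+$.

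\textbf{Step 3: summing and the exponent.} Sum over the at most polynomially many pairs $(V,\hat V)$; the polynomial factor is absorbed into $\delta$. The bound holds for every $m$, so it controls $p_{e,\max}$. This gives $p_{e,\max}(W)\le\exp(-n[\min_V D(V\|W|Q_n)+[I(Q_n,V)-R]^+-\delta])$.

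The main obstacle is that $n_0$ must depend only on $|\mathcal{X}|,|\mathcal{Y}|,R,\delta$ and not on $W$. The polynomial factors are uniform. The remaining issue is replacing $Q_n$ by $Q$ in the minimum. I would handle this with continuity of $E_{\mathrm r}(R,\cdot,W)$ in $Q$, uniform in $W$. This needs care when $W$ has zero entries and $D$ can be infinite. The way around it is to restrict $V$ to conditional types with $V\ll W$, note that the only $Q$-dependence is through linear weighting in $D(V\|W|Q)$ and through $I(Q,V)$, and control $I(Q,V)$ by uniform continuity of mutual information. Alternatively, if $Q$ is itself an $n$-type along a subsequence, the issue disappears.
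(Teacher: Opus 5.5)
Your proposal is correct and follows the same route as the paper, which cites Csisz\'ar--K\"orner for this result rather than proving it: a $W$-independent constant-composition codebook from the packing lemma, MMI decoding with ties counted as errors, a per-pair bound $\exp(-n[D(V\|W|Q_n)+[I(Q_n,V)-R]^+-\delta/2])$, and polynomially many type factors, all uniform in $W$. For the $Q_n\to Q$ step you flag, restricting to $V\ll W$ alone does not control the reweighting; what closes it is that the minimizer $V^*$ of $E_{\mathrm{r}}(R,Q_n,W)$ satisfies $D(V^*\|W|Q_n)\le E_{\mathrm{r}}(R,Q_n,W)\le\log|\mathcal{X}|$ (compare with $V=W$), so choosing $Q_n$ with the same support as $Q$ and $Q_n\ge(1-\eta)Q$ gives $E_{\mathrm{r}}(R,Q,W)\le E_{\mathrm{r}}(R,Q_n,W)+\tfrac{\eta}{1-\eta}\log|\mathcal{X}|+o(1)$ uniformly in $W$, while the packing lemma's hypothesis $H(Q_n)>R$ is harmless because $E_{\mathrm{r}}(R,Q,W)=0$ whenever $R\ge H(Q)$.
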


Theorem \ref{thm:random-coding} leads to the main question of this paper, \pooya{also raised in \cite[p.~6]{1056281}}: for general $R$ and $Q$, is the channel function $W\mapsto E_{\mathrm{ex}}(R,Q,W)$ universally attainable? 
We show that the answer is negative. }
\pooya{Our counterexample consists of two distinct channels whose expurgated exponents coincide under the equiprobable input distribution. 
We show that, for every sufficiently small rate $R$, any code that attains this exponent on one channel fails to attain it for the other.}


\section{A Counterexample}
\label{sec:univ}


\md{In this section, we introduce two DMCs for which, for every sufficiently small positive rate and for the same fixed input distribution, no single sequence of codes can attain the expurgated exponent on both channels. 
This proves that the channel function $W\mapsto E_{\mathrm{ex}}(R,Q,W)$ is not universally attainable. 
\pooya{
We first present the argument in the rate-zero limit, where it is easiest to see, and then extend it to positive rates.
}}
 
For $0 \le \varepsilon \le 1$, define the binary-input quaternary-output channels with alphabets $\mcl{X}=\{0,1\}$ and $\mcl{Y}=\{a,b,c,d\}$ as
\begin{align}\label{eq:weps}
	&W_{\varepsilon}=\begin{bmatrix}
		{(1-\varepsilon)}/2& (1-\varepsilon)/2 & \varepsilon/2 & \varepsilon/2\\
		\varepsilon/2 & \varepsilon/2 & {(1-\varepsilon)}/2& (1-\varepsilon)/2
	\end{bmatrix} ,
\end{align}
and
\begin{align}
	&\widehat{W}_{\varepsilon}=\begin{bmatrix} 
		{(1-\varepsilon)}/2 & \varepsilon/2 & (1-\varepsilon)/2 & \varepsilon/2\\
		\varepsilon/2 & {(1-\varepsilon)}/2& \varepsilon/2 &  (1-\varepsilon)/2
	\end{bmatrix} .
\end{align}
Observe that $\widehat{W}_{\varepsilon}$ is obtained from $W_{\varepsilon}$ by permuting the output symbols $b$ and $c$.
From the perspective of ML decoding, both channels are equivalent to a binary symmetric channel with crossover probability $\varepsilon$.
\md{The two channels have the same Bhattacharyya distances, and hence}
\begin{align}
	\md{E_{\mathrm{ex}}(R,Q,W_{\varepsilon})
	=E_{\mathrm{ex}}(R,Q,\widehat{W}_{\varepsilon})}
	\label{eq:equal-expurgated}
\end{align}
\md{for every input distribution $Q$ and every rate $R$. 
Let $Q_{\mathrm{u}}$ denote the equiprobable distribution on $\{0,1\}$, \pooya{which maximizes the expurgated exponent for both channels at every rate.}
\md{By \eqref{eq:rate-zero-expurgated-exponent}, the common rate-zero limit is}
\begin{align}\label{eq:ml-exp-rate-zero}
	\md{\lim_{R \downarrow 0} \,E_{\mathrm{ex}}(R,Q_{\mathrm{u}}, W) = -\frac{1}{2}\log  2 \sqrt{\varepsilon(1-\varepsilon)}.}
\end{align}}
\md{Choose $\varepsilon\in(0,1/2)$ such that\footnote{$\varepsilon < 0.0149353113291091\ldots$ is sufficient.}}
\begin{align}\label{eq:cond-varepsilon}
	\md{-\log \left(\frac{1-\varepsilon}{2}\right)<-\frac{1}{2}\log  2 \sqrt{\varepsilon(1-\varepsilon)}.}
\end{align}
\pooya{
Set ${E}_{\varepsilon}\triangleq -\log \left(\frac{1-\varepsilon}{2}\right)$. 
}

We next define the mapping $\kappa:\{0,1\}^2 \to \{a,b,c,d\}$ by  
\begin{align}
	\kappa(0, 0) = a, \quad \kappa(0, 1) = b, \quad \kappa(1, 0) = c, \quad \kappa(1,1) = d.
	\label{eq:kappa}
\end{align}
For any pair of sequences $(\bs{x},\bar{\bs{x}})\in\mcl{X}^n$, let $\bs{y}_{\kappa}(\bs{x},\bar{\bs{x}})$ be the sequence defined by $y_i\triangleq\kappa(x_i,\bar{x}_i)$ for $i=1,\ldots,n$. 
Since $\kappa$ is a bijection, each symbol $y_i$ uniquely specifies the pair $(x_i,\bar{x}_i)$, and thus $(\bs{x},\bar{\bs{x}})$ is determined componentwise from $\bs{y}_{\kappa}(\bs{x},\bar{\bs{x}})$.
Moreover, by construction, $y_i\in\{a,b\}$ when $x_i=0$ and $y_i\in\{c,d\}$ when $x_i=1$, and hence $W_{\varepsilon}(y_i|x_i)=(1-\varepsilon)/2$ for all $i$. 
Likewise, $y_i\in\{a,c\}$ when $\bar{x}_i=0$ and $y_i\in\{b,d\}$ when $\bar{x}_i=1$, and hence $\Whe(y_i|\bar{x}_i)=(1-\varepsilon)/2$ for all $i$. 
Therefore, this construction ensures that, for every such pair $(\bs{x},\bar{\bs{x}})$,
\begin{align}\label{eq:wwepsrel}
	W_{\varepsilon}^n\big(\bs{y}_{\kappa}(\bs{x},\bar{\bs{x}}) \big| \bs{x}\big)=\widehat{W}_{\varepsilon}^n\big(\bs{y}_{\kappa}(\bs{x},\bar{\bs{x}}) \big| \bs{\bar{x}}\big)= \left(\frac{1-\varepsilon}{2} \right)^n.
\end{align}

\pooya{Now consider any block code $(f_n,\phi_n)$ with at least
two messages, and choose distinct messages $m_1$ and $m_2$. Let
$\bs{y}^*=\bs{y}_{\kappa}(f_n(m_1),f_n(m_2))$. 
If $\phi_n(\bs{y}^*)\neq m_1$, receiving $\bs{y}^*$ results in a decoding error on $W_{\varepsilon}$ when $m_1$ is sent. Hence,
\begin{align}
	p_{e,\max}(W_{\varepsilon},f_n,\phi_n)
	&\geq
	\sum_{\bs{y}:\,\phi_n(\bs{y})\neq m_1}
	W_{\varepsilon}^n(\bs{y}| f_n(m_1))\\
	&\geq W_{\varepsilon}^n(\bs{y}^*| f_n(m_1))\\
	&=\left(\frac{1-\varepsilon}{2} \right)^n \label{eq-18-eq} \\
	&=\exp(-n{E}_{\varepsilon}),
\end{align}
where \eqref{eq-18-eq} follows from \eqref{eq:wwepsrel}.
If instead $\phi_n(\bs{y}^*)=m_1$, receiving $\bs{y}^*$ causes
an error on $\widehat W_{\varepsilon}$ when $m_2$ is sent.
Similarly,
\begin{align}
	p_{e,\max}(\widehat W_{\varepsilon},f_n,\phi_n)
	&\geq
	\sum_{\bs{y}:\,\phi_n(\bs{y})\neq m_2}
	\widehat W_{\varepsilon}^n(\bs{y}| f_n(m_2))\\
	&\geq \widehat W_{\varepsilon}^n(\bs{y}^*| f_n(m_2))\\
	&=\left(\frac{1-\varepsilon}{2} \right)^n  \\
	&=\exp(-n{E}_{\varepsilon}).
\end{align}
Thus, for every pair of distinct messages $m_1$ and $m_2$,
\begin{align}
	\max\!\left\{
	p_{e,m_1}(W_{\varepsilon},f_n,\phi_n),
	p_{e,m_2}(\widehat W_{\varepsilon},f_n,\phi_n)
	\right\}
	\geq \exp(-n{E}_{\varepsilon}).
	\label{eq:universal-dichotomy-1}
\end{align}
In particular, this implies that
\begin{align}
	\max\!\left\{
	p_{e,\max}(W_{\varepsilon},f_n,\phi_n),
	p_{e,\max}(\widehat W_{\varepsilon},f_n,\phi_n)
	\right\}
	\geq \exp(-n{E}_{\varepsilon}).
	\label{eq:universal-dichotomy}
\end{align}
In other words, for any code, this single output sequence forces the maximal error probability to be at least $\exp(-n{E}_{\varepsilon})$ on at least one of the two channels.
Consequently, whenever both expurgated exponents exceed ${E}_{\varepsilon}$, a sequence of codes attaining the expurgated exponent on one channel cannot attain it for the other.

Although Definition~\ref{defn: universal} concerns positive rates, the rate-zero meaning of \eqref{eq:universal-dichotomy} is precise. Consider any nontrivial zero-rate sequence, 
by which we mean $M_n\geq2$ and $\frac{1}{n}\log M_n\downarrow0$. 
By \eqref{eq:cond-varepsilon}, the common expurgated exponent in \eqref{eq:ml-exp-rate-zero} is strictly larger than ${E}_{\varepsilon}$, 
whereas \eqref{eq:universal-dichotomy} holds for every $n$. 
Hence no such sequence can attain the rate-zero expurgated exponents simultaneously on both channels. This also covers the usual stronger convention in which $M_n\to\infty$.

Fig. \ref{fig:funcs} below compares the converse exponent $-\log\big((1-\varepsilon)/2\big)$ with the rate-zero expurgated exponent $-\tfrac{1}{2}\log\big(2\sqrt{\varepsilon(1-\varepsilon)}\big)$ 
and the rate-zero random-coding exponent $-\log\!\big(\tfrac{1}{2}+\sqrt{\varepsilon(1-\varepsilon)}\big)$ for these channels. 
Observe that when $\varepsilon=0$, the converse exponent coincides with the rate-zero random-coding exponent, while the rate-zero expurgated exponent diverges.}

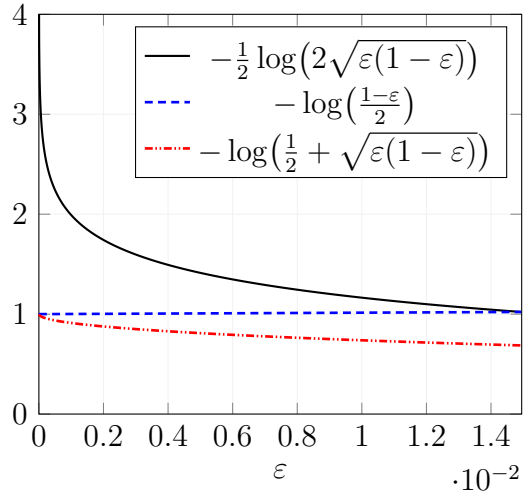
\begin{figure}[H]
 \centering
\begin{tikzpicture}
\begin{axis}[
  width=8cm,
  xlabel={$\varepsilon$},
  ylabel=\empty,
%
  xmin=0, xmax=0.01495, 
  ymin=0, ymax=4,
%
  grid=major,
  minor tick num=0,
  xtick={0,0.002,0.004,0.006,0.008,0.010,0.012,0.014},
  ytick={0,1,2,3,4},
  major grid style={line width=0.3pt, draw=gray!10},
%
  legend pos=north east,
  legend style={ 
    draw=black,
    fill=white, 
    nodes={inner sep=2.5pt},
  },
]
\addplot [thick, black, line width=0.8pt] table[x=x,y=y] {ml-expurgated.dat};
\addlegendentry{$-\frac{1}{2}\log\!\big(2\sqrt{\varepsilon(1-\varepsilon)}\big)$}

\addplot [semithick, blue, densely dashed, line width=1.0pt] table[x=x,y=y] {converse.dat};
\addlegendentry{$-\log\!\left(\tfrac{1-\varepsilon}{2}\right)$}

\addplot [semithick, red, densely dashdotdotted,, line width=1.0pt] table[x=x,y=y] {random-coding.dat};
\addlegendentry{$-\log\!\big(\tfrac{1}{2}+\sqrt{\varepsilon(1-\varepsilon)}\big)$}

\end{axis}
\end{tikzpicture}
\caption{The converse exponent $-\log\!\left(\tfrac{1-\varepsilon}{2}\right)$,
	the rate-zero expurgated exponent $-\frac{1}{2}\log\!\big(2\sqrt{\varepsilon(1-\varepsilon)}\big)$,
	and the rate-zero random-coding exponent $-\log\!\big(\tfrac{1}{2}+\sqrt{\varepsilon(1-\varepsilon)}\big)$
	for the channels $W_{\varepsilon}$ and $\widehat{W}_{\varepsilon}$. Rates and error exponents in this figure are expressed using base-2 logarithms.
		}
    \label{fig:funcs}
\end{figure}

\pooya{
The converse bound \eqref{eq:universal-dichotomy} does not depend on the rate. 
To turn the rate-zero argument into the positive-rate statement of Definition~\ref{defn: universal}, specialize \eqref{eq:expurgated-ck} to $Q_{\mathrm{u}}$. For either channel,
\begin{align}\label{eq-exp-exp}
	E_{\mathrm{ex}}(R,Q_{\mathrm{u}},W)
	= \min_{\substack{0\leq\alpha\leq1\\
	\log 2-h_2(\alpha)\leq R}}
	\left\{
	-\alpha\log\!\left(2\sqrt{\varepsilon(1-\varepsilon)}\right)
	+\log 2-h_2(\alpha)-R
	\right\},
\end{align}
where $h_2(\alpha)\triangleq-\alpha\log\alpha-(1-\alpha)\log(1-\alpha)$. This expression is continuous from the right at $R=0$. Therefore, \eqref{eq:cond-varepsilon} implies that there exists $R_{\varepsilon}>0$ such that
\begin{align}
	{E}_{\varepsilon}<E_{\mathrm{ex}}(R,Q_{\mathrm{u}},W)
	\label{eq:positive-rate-gap}
\end{align}
for both channels and every $R\in(0,R_{\varepsilon})$. We take $R_{\varepsilon}$ to be the largest such threshold.
Fix $R\in(0,R_{\varepsilon})$ and choose
\begin{align}
	0<\delta<\min\!\left\{R,
	E_{\mathrm{ex}}(R,Q_{\mathrm{u}},W_{\varepsilon})-{E}_{\varepsilon}\right\}.
	\label{eq:delta-gap}
\end{align}
This choice ensures that $R-\delta>0$ and that both expurgated exponents remain strictly above ${E}_{\varepsilon}$.
Suppose that a single sequence of codes attains the expurgated exponent on both channels at rate $R$.
For all sufficiently large $n$, these codes have rate at least $R-\delta>0$ and hence at least two messages.
Every such code must therefore still satisfy \eqref{eq:universal-dichotomy}: at least one of the two maximal error probabilities must be at least $\exp(-n{E}_{\varepsilon})$.
But our choice $R\in(0,R_{\varepsilon})$ ensures that both expurgated exponents are strictly larger than ${E}_{\varepsilon}$.
With $\delta$ chosen as in \eqref{eq:delta-gap}, attaining these exponents would therefore require both maximal error probabilities to be strictly smaller than $\exp(-n{E}_{\varepsilon})$ for all sufficiently large $n$.
This contradicts the bound that every such code must satisfy, and establishes the following theorem.

\begin{theorem}\label{thm:universal}
	There exist a fixed input distribution $Q$, two DMCs
	$\mathcal{W}=\{W_{\varepsilon},\widehat W_{\varepsilon}\}$, and a threshold $R_{\varepsilon}>0$ such that, for every $R\in(0,R_{\varepsilon})$, the channel function
	$W\mapsto E_{\mathrm{ex}}(R,Q,W)$ is not universally attainable  over $\mathcal{W}$ at rate $R$. 
\end{theorem}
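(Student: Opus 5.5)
The proof assembles the pieces already developed above. We take $Q=Q_{\mathrm{u}}$ and $\mathcal{W}=\{W_{\varepsilon},\widehat W_{\varepsilon}\}$, with $\varepsilon\in(0,1/2)$ satisfying \eqref{eq:cond-varepsilon}; such $\varepsilon$ exists because the right-hand side diverges as $\varepsilon\downarrow0$ while the left-hand side tends to $\log 2$. We first record that the two channels have identical Bhattacharyya distances, namely $d(0,1)=-\log\big(2\sqrt{\varepsilon(1-\varepsilon)}\big)$ and $d(x,x)=0$. This gives \eqref{eq:equal-expurgated}, and specializing \eqref{eq:expurgated-ck} to $Q_{\mathrm{u}}$ yields the one-parameter form \eqref{eq-exp-exp}. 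Here $P_{X\bar X}$ is parametrized by $\alpha=P(X\neq\bar X)$, since symmetric marginals force $P(0,1)=P(1,0)=\alpha/2$, and then $I_P(X;\bar X)=\log2-h_2(\alpha)$.

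The next step is the rate-independent converse \eqref{eq:universal-dichotomy}. For any code with at least two messages we pick distinct $m_1,m_2$ and form $\bs{y}^*=\bs{y}_\kappa(f_n(m_1),f_n(m_2))$. By \eqref{eq:wwepsrel}, whichever message the decoder outputs on $\bs{y}^*$ causes an error of probability at least $\big((1-\varepsilon)/2\big)^n=\exp(-nE_\varepsilon)$ on one of the two channels.

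The main technical point is to pass from the rate-zero gap \eqref{eq:cond-varepsilon} to a positive-rate gap \eqref{eq:positive-rate-gap}. We use right-continuity of \eqref{eq-exp-exp} at $R=0$. At $R=0$ the constraint forces $\alpha=1/2$, and the value equals \eqref{eq:ml-exp-rate-zero}. For small $R>0$ the feasible set is an interval $[\alpha_-(R),\alpha_+(R)]$ around $1/2$ that shrinks to $\{1/2\}$ as $R\downarrow0$, and the objective is jointly continuous in $(\alpha,R)$. The minimum therefore converges to the $R=0$ value. Hence there is $R_\varepsilon>0$ such that $E_{\mathrm{ex}}(R,Q_{\mathrm{u}},W)>E_\varepsilon$ for all $R\in(0,R_\varepsilon)$; this bound is the same for both channels.

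To conclude, fix $R\in(0,R_\varepsilon)$ and choose $\delta$ as in \eqref{eq:delta-gap}. Suppose, for contradiction, that Definition~\ref{defn: universal} holds. Then for all $n\ge n_0$ there is a code of rate at least $R-\delta>0$, hence with $M_n\ge2$ messages for $n$ large, satisfying $p_{e,\max}\le\exp[-n(E_{\mathrm{ex}}-\delta)]<\exp(-nE_\varepsilon)$ on both channels. This contradicts \eqref{eq:universal-dichotomy}, so the channel function is not universally attainable over $\mathcal{W}$ at rate $R$.
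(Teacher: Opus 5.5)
Your proposal is correct and follows essentially the same route as the paper. The single output sequence $\bs{y}_\kappa(f_n(m_1),f_n(m_2))$ gives the rate-independent dichotomy \eqref{eq:universal-dichotomy}, right-continuity of \eqref{eq-exp-exp} at $R=0$ carries the gap \eqref{eq:cond-varepsilon} to small positive rates, and the choice of $\delta$ in \eqref{eq:delta-gap} then contradicts Definition~\ref{defn: universal}; you also correctly spell out two points the paper only asserts, namely the existence of $\varepsilon$ and the shrinking-interval argument for right-continuity.
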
}

\pooya{
\begin{remark}
	The two channels have the same expurgated exponent for every input distribution. 
	The equiprobable distribution is optimal for both channels, but this choice is not essential to the argument, since the lower bound in \eqref{eq:universal-dichotomy} does not depend on the input distribution. 
	For example, consider
	\begin{align}
		Q_\alpha=(\alpha,1-\alpha), \qquad \varepsilon=0.001.
	\end{align}
	For both channels, the rate-zero expurgated exponent exceeds the converse exponent when
	\begin{align}
		-2\alpha(1-\alpha)\log\!\left(2\sqrt{\varepsilon(1-\varepsilon)}\right) > -\log\!\left(\frac{1-\varepsilon}{2}\right),
	\end{align}
	which gives, approximately, $ 0.14743 <\alpha<0.85257$. This interval includes $\alpha=1/2$, corresponding to $Q_{\mathrm u}$.
	For each fixed distribution in this interval, the same strict inequality holds at sufficiently small positive rates by continuity. 
	Thus, Theorem \ref{thm:universal} also holds for each of these distributions, with a rate threshold that may depend on the distribution.
\end{remark}}


\pooya{
\begin{remark}
Theorem~\ref{thm:universal} also holds when the maximal error probability in Definition~\ref{defn: universal} is replaced by the average error probability. To see this, consider any code $(f_n,\phi_n)$ with $M_n\geq2$ messages. If every message has error probability at least $\exp(-nE_{\varepsilon})$ on $W_{\varepsilon}$, then the average error probability satisfies the same bound. 
Otherwise, there must exist a message $m_0$ such that
\begin{align} \label{eq:m0choose}
p_{e,m_0}(W_{\varepsilon},f_n,\phi_n)
<\exp(-nE_{\varepsilon}).
\end{align}
For every other message $m$, \eqref{eq:universal-dichotomy-1} implies that at least one of $p_{e,m_0}(W_{\varepsilon},f_n,\phi_n)$ and $p_{e,m}(\widehat W_{\varepsilon},f_n,\phi_n)$ must be at least $\exp(-nE_{\varepsilon})$. 
Since \eqref{eq:m0choose} rules out the first possibility, we have
\begin{align}
p_{e,m}(\widehat W_{\varepsilon},f_n,\phi_n)
\geq\exp(-nE_{\varepsilon}).
\end{align}
Thus, on at least one channel, at least $M_n-1$ messages satisfy the lower bound.
Averaging over all $M_n$ messages gives
\begin{align}
\max\!\left\{
p_e(W_{\varepsilon},f_n,\phi_n),
p_e(\widehat W_{\varepsilon},f_n,\phi_n)
\right\}
&\geq \frac{M_n-1}{M_n}\exp(-nE_{\varepsilon}) \\
&\geq \frac12\exp(-nE_{\varepsilon}).
\label{eq:universal-dichotomy-average}
\end{align}
The factor $1/2$ does not affect the error exponent, so the same conclusions hold for every $R\in(0,R_{\varepsilon})$ and for nontrivial zero-rate sequences.
\end{remark}}

\section{A Counterexample for MMI Decoding}
\label{sec:mmi}

\pooya{
The previous section shows that no single sequence of codes can attain the expurgated exponent on both channels simultaneously at sufficiently small positive rates. In this section, we prove a stronger result for MMI decoding: at these rates, the expurgated exponent cannot be attained on either channel individually, even when the codebooks are chosen specifically for that channel.
}
As a consequence, MMI cannot attain the expurgated exponent for all DMCs, seemingly contradicting Tamir and Merhav \cite{9656768}.
Nevertheless, MMI may still achieve the expurgated exponent for certain families of channels.
In the Appendix, we show that for the binary symmetric channel, MMI does attain the expurgated exponent at rate zero.
This was originally claimed by Csiszár and Körner in \cite{1056281} for all rates, although no proof was provided.

Let the channel be $W_{\varepsilon}$ defined in \eqref{eq:weps}, with $\varepsilon$ satisfying \eqref{eq:cond-varepsilon}.
For $M\geq 3$, consider any constant-composition collection $\mathcal{C}_n=\{\bs{x}_1,\ldots,\bs{x}_M\}$ of blocklength $n$, and let $f_n(m)=\bs{x}_m$ and $\phi_n$ denote the encoder and the MMI decoder, respectively.
Fix a message $m$, and choose $\bar m$ such that $\bs{x}_m$ and $\bs{x}_{\bar m}$ are not bitwise complements. Such a $\bar m$ always exists when $M \ge 3$, since a binary sequence has at most one complement.
The error probability when sending message $m$ is at least the probability that the decoder \pooya{strictly favours} $\bar{m}$ over $m$, that is,
\begin{align}\label{eq:mbarmerr}
    p_{{e},m}(W_{\varepsilon}, f_n, \phi_n) &\geq \mathbb{P}\!\left[I\big(\hat{P}_{\bs{x}_{m}\bs{Y}}\big) < I\big(\hat{P}_{\bs{x}_{\bar{m}}\bs{Y}}\big)\right].
\end{align}
Note that in \eqref{eq:mbarmerr}, ties are not counted as errors.
Consider the sequence $\bs{y}_{\kappa}(\bs{x}_m,\bs{x}_{\bar{m}})$ defined in the previous Section. We will construct a sequence $\tilde{\bs{y}}$ from $\bs{y}_{\kappa}(\bs{x}_m,\bs{x}_{\bar{m}})$
for which the strict inequality in \eqref{eq:mbarmerr} holds.
First observe that $I\big(\hat{P}_{\bs{x}_{m}\bs{y}_{\kappa}(\bs{x}_m,\bs{x}_{\bar{m}})}\big)=I\big(\hat{P}_{\bs{x}_{\bar{m}}\bs{y}_{\kappa}(\bs{x}_m,\bs{x}_{\bar{m}})}\big)$. 
Specifically,
\begin{align}
    I\big(\hat{P}_{\bs{x}_{m}\bs{y}_{\kappa}(\bs{x}_m,\bs{x}_{\bar{m}})}\big)
    &=H_{\hat{P}_{\bs{x}_m}}(X) - H_{\hat{P}_{\bs{x}_{m}\bs{y}_{\kappa}(\bs{x}_m,\bs{x}_{\bar{m}})}}(X | Y) \label{eq:twomatch1} \\
    &=H_{\hat{P}_{\bs{x}_m}}\!(X)\\
    I\big(\hat{P}_{\bs{x}_{\bar{m}}\bs{y}_{\kappa}(\bs{x}_m,\bs{x}_{\bar{m}})}\big)
    &=H_{\hat{P}_{\bs{x}_{\bar{m}}}}(\bar{X}) - H_{\hat{P}_{\bs{x}_{\bar{m}}\bs{y}_{\kappa}(\bs{x}_m,\bs{x}_{\bar{m}})}}(\bar{X} | Y)\\
    &=H_{\hat{P}_{\bs{x}_{\bar{m}}}}(\bar{X}). \label{eq:twomatch2}
\end{align}
By the construction of the bijective mapping $\kappa$ in \eqref{eq:kappa}, the output sequence $\bs{y}_{\kappa}(\bs{x}_m,\bs{x}_{\bar{m}})$ determines both $\bs{x}_m$ and $\bs{x}_{\bar{m}}$ symbol by symbol.
Hence, both conditional entropies are zero.
Since the code is constant-composition, the two marginal entropies are equal, so the mutual informations match.


We now modify the output sequence $\bs{y}_{\kappa}(\bs{x}_m,\bs{x}_{\bar m})$ in a single position. 
Since $\bs{x}_m$ and $\bs{x}_{\bar m}$ have the same type but are not the complement of each other, we have $\hat{P}_{\bs{x}_m\bs{x}_{\bar{m}}}(0,1)=\hat{P}_{\bs{x}_m\bs{x}_{\bar{m}}}(1,0)>0$, 
and at least one of $\hat{P}_{\bs{x}_m\bs{x}_{\bar{m}}}(0,0)$ or $\hat{P}_{\bs{x}_m\bs{x}_{\bar{m}}}(1,1)$ is positive.
If $\hat{P}_{\bs{x}_m\bs{x}_{\bar{m}}}(0,0)>0$, then select an index $i$ such that $(x_{m,i},x_{\bar m,i})=(1,0)$, and modify $\bs{y}_{\kappa}(\bs{x}_m,\bs{x}_{\bar{m}})$ by changing only its $i$th symbol from $c$ to $a$.
Otherwise, if $\hat{P}_{\bs{x}_m\bs{x}_{\bar{m}}}(1,1)>0$, and select an index $i$ with $(x_{m,i},x_{\bar m,i})=(0,1)$, and change only the $i$th symbol from $b$ to $d$.
Denote the modified sequence by $\tilde{\bs{y}}$.
By construction, $\tilde{\bs{y}}$ still determines $\bs{x}_{\bar m}$ symbol by symbol since both $c$ and $a$ correspond to $\bar{x}=0$, and both $b$ and $d$ correspond to $\bar{x}=1$.
Hence $H_{\hat P_{\bs{x}_{\bar m}\tilde{\bs{y}}}}(\bar{X}|Y)=0$, and therefore
\begin{align}
	I\big(\hat P_{\bs{x}_{\bar m}\tilde{\bs{y}}}\big)=H_{\hat P_{\bs{x}_{\bar m}}}(\bar{X}).
\end{align}
However, the modification breaks the property that $X$ is determined by $Y$ from the joint types of $(\bs{x}_m,\tilde{\bs{y}})$.
Indeed, after changing $c$ to $a$ (or $b$ to $d$), there exists a symbol $y\in\{a,d\}$ that appears in $\tilde{\bs{y}}$ both at some position where $x_{m,i}=0$ and at another position where $x_{m,j}=1$. 
Thus, $\hat{P}_{\bs{x}_m\tilde{\bs{y}}}(0,y)>0$ and $\hat{P}_{\bs{x}_m\tilde{\bs{y}}}(1,y)>0$, and hence $H_{\hat{P}_{\bs{x}_m\tilde{\bs{y}}}}(X|Y)>0$. Therefore,
\begin{align} 
	I\big(\hat P_{\bs{x}_m\tilde{\bs{y}}}\big) 
	&= H_{\hat P_{\bs{x}_m}}(X) - H_{\hat P_{\bs{x}_m\tilde{\bs{y}}}}(X|Y) \\ 
	&< H_{\hat P_{\bs{x}_m}}(X)\\ &=H_{\hat P_{{\bs{x}_{\bar{m}}}}}(\bar{X}). 
\end{align} 
Consequently, we conclude that
\begin{align} 
	I\big(\hat P_{\bs{x}_m\tilde{\bs{y}}}\big)< I\big(\hat P_{{\bs{x}_{\bar{m}}}\tilde {\bs{y}}}\big) . 
\end{align}
Hence, for the given pair $(m,\bar m)$, we have constructed an output sequence under which the MMI decoder strictly favours $\bar m$ over $m$ when $m$ was sent.
In particular, the error probability is at least the probability of observing $\tilde{\bs{y}}$.
Since $\tilde{\bs{y}}$ differs from $\bs{y}_{\kappa}(\bs{x}_m,\bs{x}_{\bar m})$ in exactly one position, the output probability at that coordinate changes from $(1-\varepsilon)/2$ to $\varepsilon/2$. 
Thus,
\begin{align}
	W_\varepsilon^n(\tilde{\bs{y}}|\bs{x}_m) &= \Big(\frac{1-\varepsilon}{2}\Big)^{n-1}\frac{\varepsilon}{2}\\
	&= \Big(\frac{1-\varepsilon}{2}\Big)^{n}\frac{\varepsilon}{1-\varepsilon}.
\end{align}
Combining the above gives
\begin{align}
	p_{e,m}(W_\varepsilon,f_n,\phi_n) \ge \Big(\frac{1-\varepsilon}{2}\Big)^{n}\frac{\varepsilon}{1-\varepsilon}.
\end{align}
Observe that this lower bound does not depend on the particular message $m$, so
\begin{align}\label{eq:mmi-converse}
	-\frac{1}{n}\log p_e(W_\varepsilon,f_n,\phi_n) \leq -\log\Big(\frac{1-\varepsilon}{2}\Big) + o(1).
\end{align}
Moreover, the argument does not depend on the specific codebook and applies to any constant-composition code.
Therefore, \eqref{eq:mmi-converse} holds for every such code under MMI decoding, including the optimal one.
From the previous section, for every rate $R$ in $[0,{R}_{\varepsilon})$, this converse exponent lies strictly below the expurgated exponent of the channel at that rate.
This completes the counterexample, showing that no constant-composition code can attain the expurgated exponent for this channel under MMI decoding.
If $n$ is large enough, one can extract from any block code a constant-composition subcode by taking its largest type class, with negligible loss in rate and without increasing the error probability \cite[p. 95]{SHANNON196765}. 
Therefore, the above argument in fact holds for any code of sufficiently large blocklength.

We summarize the conclusion of this section in the following theorem.

\begin{theorem}
\label{thm:mmi}
	There exists a family of DMCs $\mathcal{W}$ and a rate threshold $R_{\mathcal{W}}>0$ with the following property. 
	For every channel $W\in\mathcal{W}$ and every rate $R\in[0,R_{\mathcal{W}})$, the reliability function of the channel under MMI decoding is strictly smaller than the expurgated error exponent of $W$ at rate $R$.
\end{theorem}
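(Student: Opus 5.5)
We will take $\mathcal{W}=\{W_{\varepsilon},\Whe\}$ with $\varepsilon\in(0,1/2)$ satisfying \eqref{eq:cond-varepsilon}, and set $R_{\mathcal{W}}=R_{\varepsilon}$, the threshold from \eqref{eq:positive-rate-gap}. Because $\Whe$ is $W_{\varepsilon}$ with outputs $b$ and $c$ relabelled, and empirical mutual information is invariant under any bijective relabelling of the output alphabet, the MMI decoder acts on $\Whe$ exactly as it acts on $W_{\varepsilon}$ after the relabelling. Hence every MMI error-probability bound proved for $W_{\varepsilon}$ carries over verbatim to $\Whe$. By \eqref{eq:equal-expurgated}, the two expurgated exponents also coincide. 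It therefore suffices to treat $W_{\varepsilon}$.

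For $W_{\varepsilon}$, we first reduce to constant-composition codes. Given any sequence of codes of rate at least $R-\delta$, we keep the largest type class. This costs at most $\frac{1}{n}\log(n+1)$ in rate. For the average error probability, we first expurgate the worse half of the messages, so that the maximal error probability is at most twice the average, and then restrict to a type class. Under MMI decoding the restricted code is again decoded by MMI over the smaller codebook. Shrinking the codebook can only remove competitors, so no message's error probability increases. If the surviving code has fewer than three messages, we handle it in either of two ways: at positive rate, $M_n\ge3$ holds for large $n$ anyway; at rate zero with nontrivial sequences, we apply the argument to the full code when it has $M_n\ge3$, or invoke a two-message case directly.

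For a constant-composition code with $M\ge3$, we apply the one-symbol perturbation argument above. The codeword $\bs{x}_{\bar m}$ is chosen not to be the complement of $\bs{x}_m$. The modified output $\tilde{\bs{y}}$ then satisfies $I(\hat P_{\bs{x}_m\tilde{\bs{y}}})<I(\hat P_{\bs{x}_{\bar m}\tilde{\bs{y}}})$, which yields
\begin{align}
p_{e,m}(W_{\varepsilon},f_n,\phi_n)\ge\Big(\frac{1-\varepsilon}{2}\Big)^n\frac{\varepsilon}{1-\varepsilon}.
\end{align}
This gives \eqref{eq:mmi-converse}: the MMI reliability function of $W_{\varepsilon}$ is at most $E_{\varepsilon}=-\log\frac{1-\varepsilon}{2}$ at every rate. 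Then \eqref{eq:positive-rate-gap}, together with the continuity of \eqref{eq-exp-exp} at $R=0$ and the limit \eqref{eq:ml-exp-rate-zero} at $R=0$, gives $E_{\varepsilon}<E_{\mathrm{ex}}(R,Q_{\mathrm u},W_{\varepsilon})\le E_{\mathrm{ex}}(R,W_{\varepsilon})$ for all $R\in[0,R_{\varepsilon})$. This is the claimed strict inequality.

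The main obstacle is the reduction step: making sure that passing to the type-class subcode really keeps MMI as the decoder and does not increase error probabilities, which follows because competitors are only removed. A second concern is tie-breaking in MMI, which the strict-inequality construction sidesteps. The edge case at $R=0$, where codes may have only two messages, needs its own argument. Here the pair $\bs{x}_1,\bs{x}_2$ may be complements, and for complements the perturbation fails. To deal with this, I would interpret rate zero as $M_n\to\infty$, matching the convention in Section~\ref{sec:univ}. Alternatively, I would note that the reliability function at $R=0$ is defined as the limit $R\downarrow0$, which already falls under the positive-rate case.
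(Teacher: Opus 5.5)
Your proposal is correct and follows the paper's proof. The shared steps are the constant-composition reduction via the largest type class, a non-complementary partner $\bar m$, the one-symbol perturbation of $\bs{y}_{\kappa}(\bs{x}_m,\bs{x}_{\bar m})$ giving the converse $E_{\varepsilon}$, and the comparison with the expurgated exponent below $R_{\varepsilon}$. Your relabelling argument for $\widehat W_{\varepsilon}$ and your halving step for the average error fill in details the paper leaves implicit.

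At $R=0$, rely on the $R\downarrow 0$ limit convention rather than on $M_n\to\infty$. For slowly growing $M_n$ the largest type class may contain fewer than three codewords, and the equal-entropy step of the perturbation argument does not apply directly to the full, non-constant-composition code.
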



Fig. \ref{fig:mmi_fig} compares the expurgated error exponent with the converse exponent for $\varepsilon = 0.001$.
For this value of $\varepsilon$, which satisfies \eqref{eq:cond-varepsilon}, the two exponents differ over a nontrivial range of rates.

\begin{figure}[H]
 \centering
\begin{tikzpicture}
\begin{axis}[
  width=8cm,
  xlabel={$R$ (bits/use)}, 
  ylabel=\empty,
  xmin=0, xmax=0.205,
  ymin=0.7, ymax=2.2,
  grid=major,
  minor tick num=0,
  major grid style={line width=0.3pt, draw=gray!10},
  legend pos=north east,
  legend style={ 
    draw=black,
    fill=white,
    nodes={inner sep=2.5pt},
  },
]
\addplot [thick, black, line width=0.8pt] table[x=x,y=y] {mmi-case.dat};
\addlegendentry{$E_{\mathrm{ex}}(R,W_{\varepsilon})$}

\addplot [semithick, black, densely dotted, line width=1.0pt] table[x=x,y=y] {mmi-converse.dat};
\addlegendentry{${E}_{\varepsilon}$}

\end{axis}
\end{tikzpicture}
    \caption{Comparison between the expurgated error exponent and the converse exponent 
        ${E}_{\varepsilon}=-\log\!\big[(1-\varepsilon)/2\big]$ for $\varepsilon = 0.001$.  
        In this example, ${R}_{\varepsilon} \approx 0.1865$ bits/use, so any code of rate below this value cannot attain the expurgated exponent under MMI decoding.}
        \label{fig:mmi_fig}
\end{figure}
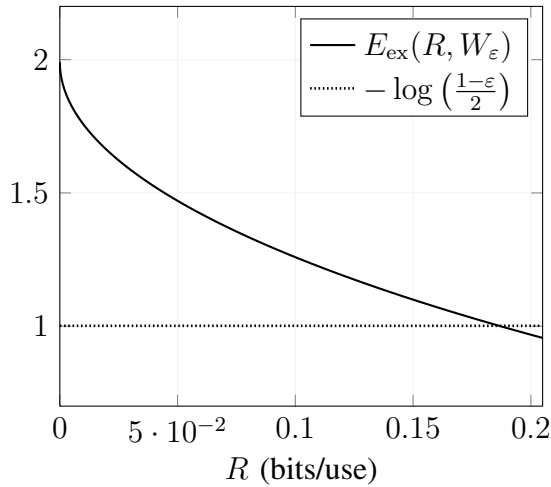

\pooya{So far, we have considered deterministic decoders. We now show that stochastic mutual information (SMI) decoding also fails to attain the expurgated exponent for the channels considered here.}
From the result below, whose proof can be found in the Appendix, it follows that for any nonnegative decoding metric $q$, the error exponent of the maximum metric decoder is always at least as good as that of the corresponding stochastic decoder. 

\begin{theorem}\label{thm:comp}
	Fix an encoder $f$ and a deterministic maximum-metric decoder $\phi$ based on a nonnegative metric $q$. Let $\hat{\phi}$ be the associated stochastic decoder that, 
	given $\bs{y}$, selects $\hat m$ with probability proportional to 	$q\big(f(m),\bs{y}\big)$. Then, for every message $m$ and every DMC $W$,
	\begin{align}
		p_{e,m}(W,f,\phi)\le 2\,p_{e,m}(W,f,\hat{\phi}).
	\end{align}
	
\end{theorem}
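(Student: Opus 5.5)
Fix the message $m$ and the codebook $f$. The plan is to work output by output: compare the error event of $\phi$ on each $\bs{y}$ with the error probability of $\hat\phi$ on the same $\bs{y}$, and then average over $W^n(\cdot|f(m))$. For a given $\bs{y}$, let $q_{m'}\triangleq q(f(m'),\bs{y})\ge 0$. Define the stochastic error given $\bs{y}$ as
\begin{align}
e(\bs{y})\triangleq 1-\frac{q_m}{\sum_{m'} q_{m'}},
\end{align}
with the convention that $\hat\phi$ is uniform when all metrics vanish. The deterministic error indicator is $\mathds{1}\{\phi(\bs{y})\neq m\}$. Since both error probabilities are expectations of these quantities under $W^n(\cdot|f(m))$, and the weights are the same, it suffices to prove the pointwise inequality
\begin{align}
\mathds{1}\{\phi(\bs{y})\neq m\}\le 2\,e(\bs{y})
\end{align}
for every $\bs{y}$. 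Note that this makes the result independent of $W$.

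To prove the pointwise bound, I only need to check the outputs $\bs{y}$ on which $\phi$ errs, since otherwise the left-hand side is zero. If $\phi(\bs{y})=\hat m\neq m$, then $\hat m$ maximizes the metric, so $q_{\hat m}\ge q_m$. It follows that $\sum_{m'}q_{m'}\ge q_m+q_{\hat m}\ge 2q_m$, which gives $q_m/\sum_{m'}q_{m'}\le 1/2$ and hence $e(\bs{y})\ge 1/2$. This is exactly the required inequality.

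The degenerate case where all $q_{m'}=0$ is handled separately. There, the uniform choice gives $e(\bs{y})=1-1/M\ge 1/2$ whenever $M\ge 2$, so the bound still holds.

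The only real obstacle is tie-breaking. I will assume $\phi$ outputs some maximizer under an arbitrary fixed rule. The argument above uses only $q_{\hat m}\ge q_m$, so ties are covered automatically and no separate treatment is needed. Multiplying the pointwise inequality by $W^n(\bs{y}|f(m))$ and summing over $\bs{y}$ then gives $p_{e,m}(W,f,\phi)\le 2\,p_{e,m}(W,f,\hat\phi)$.
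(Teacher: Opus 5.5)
Your proof is correct, and it is more direct than the paper's, though both work output by output.

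\textbf{The paper's route.} It passes through an intermediate quantity. It first relaxes the deterministic error indicator, with ties counted as errors, to the truncated union bound $\min\{1,\beta(\bs{y})\}$, where $\beta(\bs{y})=\sum_{m'\neq m}q(\bs{x}_{m'},\bs{y})/q(\bs{x}_m,\bs{y})$. It then writes the stochastic decoder's conditional error as $1-\frac{1}{1+\beta}$ and uses the scalar inequality $\frac{1}{1+\beta}\le 1-\frac12\min\{1,\beta\}$. This gives $p_{e,m}(W,f,\hat\phi)\ge\frac12\sum_{\bs y}W^n(\bs y|\bs x_m)\min\{1,\beta(\bs y)\}$.

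\textbf{Your route.} You compare the indicator with $2e(\bs y)$ directly. You use only that an error of $\phi$ forces $q_{\hat m}\ge q_m$, hence $q_m/\sum_{m'}q_{m'}\le\frac12$. This is exactly the $\beta\ge 1$ branch of the paper's inequality. You never need the $\beta<1$ branch, because the indicator vanishes there.

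\textbf{What yours buys.} It is shorter, and it handles the edge cases cleanly. The paper divides by $q(\bs{x}_m,\bs{y})$ without comment. So the cases $q_m=0$ (where $\beta=\infty$) and all metrics vanishing (where the normalization is undefined) rely on an unstated convention, whereas you treat them explicitly.

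\textbf{What the paper's buys.} It gives the chain $p_{e,m}(W,f,\phi)\le\sum_{\bs y}W^n(\bs y|\bs x_m)\min\{1,\beta(\bs y)\}\le 2\,p_{e,m}(W,f,\hat\phi)$. The middle term is the truncated-union-bound quantity usually evaluated in exponent derivations. Combined with the trivial bound $\frac{\beta}{1+\beta}\le\min\{1,\beta\}$, this shows the stochastic decoder's error probability is within a factor of $2$ of that quantity itself, not merely bounded relative to the deterministic decoder's error.
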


The above result and Theorem \ref{thm:mmi} imply that the SMI decoder also fails to attain the expurgated error exponent for all DMCs, thus providing a counterexample to Tamir and Merhav's result for all DMCs \cite{9656768}.
We emphasize that this counterexample does not rule out the possibility that MMI or SMI may attain the expurgated error exponent for certain families of channels. 
Indeed, in the Appendix, we show that MMI achieves the expurgated exponent at rate zero for binary symmetric channels, as claimed without proof in \cite{1056281}.

\begin{appendices}

\section{Proof of Theorem \ref{thm:comp}}
\label{app:a}


	Let $\mcl{C}=\{\bs{x}_1,\ldots,\bs{x}_M\}$ be the codebook, so that $f(m)=\bs{x}_m$. Then
	\begin{align}
		p_{e,m}(W,f,\phi)
		&\leq \mathbb{P}\!\left[\bigcup_{m'\neq m} q(\bs{x}_m,\bs{Y}) \leq q(\bs{x}_{m'},\bs{Y}) \right]\label{eq:to-comp-0} \\
		&= \sum_{\bs{y}} W^n(\bs{y}|\bs{x}_m) \, \mathds{1}\!\left\{\bigcup_{m'\neq m}q(\bs{x}_m,\bs{y}) \leq q(\bs{x}_{m'},\bs{y}) \right\}\\
		&\leq \sum_{\bs{y}} W^n(\bs{y}|\bs{x}_m) \, \min\left\{1, \sum_{m'\neq m} \frac{q(\bs{x}_{m'},\bs{y})}{q(\bs{x}_m,\bs{y}) }\right\}. \label{eq:to-comp-1}
	\end{align}
	The upper bound in \eqref{eq:to-comp-0} follows since we count ties as errors. 
	For the stochastic decoder,
	\begin{align}
		p_{e,m}(W,f,\hat{\phi}) &= 1 - \sum_{\bs{y}} W^n(\bs{y}|\bs{x}_m) \frac{q(\bs{x}_m,\bs{y})}{\sum_{m'} q(\bs{x}_{m'},\bs{y})}\\
		&= 1 - \sum_{\bs{y}} W^n(\bs{y}|\bs{x}_m) \frac{1}{1 + \sum_{m\neq m'} \frac{q(\bs{x}_{m'},\bs{y})}{q(\bs{x}_m,\bs{y})}}.
	\end{align}
	Using $\frac{1}{1+\beta}\le 1-\frac12\min\{1,\beta\}$ for $\beta\geq 0$, we obtain
	\begin{align}
		p_{e,m}(W,f,\hat{\phi}) &\geq 1 - \sum_{\bs{y}} W^n(\bs{y}|\bs{x}_m) \left(1- \frac{1}{2}\min\left\{1,\sum_{m'\neq m}\frac{q(\bs{x}_{m'},\bs{y})}{q(\bs{x}_m,\bs{y})} \right\} \right)\\
		&= \frac{1}{2}\sum_{\bs{y}} W^n(\bs{y}|\bs{x}_m) \min\left\{1,\sum_{m'\neq m}\frac{q(\bs{x}_{m'},\bs{y})}{q(\bs{x}_m,\bs{y})} \right\} . \label{eq:to-comp-2}
	\end{align}
	Comparing \eqref{eq:to-comp-1} and \eqref{eq:to-comp-2} yields the claim.

\section{Rate zero proof for binary symmetric channels}
\label{app:b}
 In this appendix, we address the claim of Csiszár and Körner \cite{1056281} that, for uniform-input binary symmetric channels, MMI achieves the expurgated exponent. 
We provide a proof for the case of rate zero.

From \cite[Thm.~1]{1056281}, the expurgated error exponent achievable at rate $R$ under a continuous  type-dependent decoding metric $q$ is given by
\begin{align}
	E_{\mathrm{ex}}^{{q}}(R, W) = \max_Q \min_{\substack{P_{X\bar{X}Y}: P_X=P_{\bar{X}}=Q\\ q(P_{XY})\leq q(P_{\bar{X}Y})\\I_P(X;\bar{X})\leq R}} \,\, 
	D\big(P \| P_{X\bar{X}}\times W\big) + I_P(X;\bar{X}) - R.
\end{align}

The following lemma characterizes the rate-zero limit of this exponent.
 
\begin{lemma}\label{lm:rate-zero-type}
	For a type-dependent decoding metric $q$, we have
	 \begin{align}\label{eq:rate-zero-type}
		\lim_{R \downarrow 0}\, E_{\mathrm{ex}}^{{q}}(R, W) = \max_Q \min_{\substack{P_{X\bar{X}Y}: P_{X\bar{X}}=Q\times Q\\ q(P_{XY})\leq q(P_{\bar{X}Y})}} \,\, D\big(P \| Q\times Q \times W\big) .
	\end{align}
\end{lemma}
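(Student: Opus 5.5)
The plan is to prove the identity for each fixed $Q$ and then handle the outer maximization by a simple sandwich argument. For fixed $Q$, write
\begin{align}
F_Q(R)\triangleq \min_{\substack{P_{X\bar{X}Y}: P_X=P_{\bar{X}}=Q\\ q(P_{XY})\leq q(P_{\bar{X}Y})\\ I_P(X;\bar{X})\leq R}} D\big(P \| P_{X\bar{X}}\times W\big) + I_P(X;\bar{X}) - R ,
\end{align}
and let $G(Q)$ denote the inner minimum on the right-hand side of \eqref{eq:rate-zero-type}. We use the convention that a minimum over an empty set is $+\infty$, and we interpret $\max_Q$ as a supremum if needed.

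\textbf{Step 1 (upper bound).} The key observation is that $I_P(X;\bar{X})=0$ together with $P_X=P_{\bar X}=Q$ is equivalent to $P_{X\bar X}=Q\times Q$. Hence the feasible set defining $G(Q)$ is contained in the feasible set of $F_Q(R)$ for every $R\ge 0$. On that subset the objective of $F_Q(R)$ equals $D(P\|Q\times Q\times W)-R$. Therefore
\begin{align}
F_Q(R)\le G(Q)-R
\end{align}
for all $Q$ and $R>0$. Taking the maximum over $Q$ gives $\limsup_{R\downarrow0}E^q_{\mathrm{ex}}(R,W)\le \max_Q G(Q)$.

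\textbf{Step 2 (lower bound for fixed $Q$).} I will show $\liminf_{R\downarrow 0}F_Q(R)\ge G(Q)$ by a compactness argument.
\begin{enumerate}[label=(\roman*)]
\item Take $R_k\downarrow0$ and, for each $k$, a (near-)minimizer $P_k$ of $F_Q(R_k)$; if $F_Q(R_k)=+\infty$ there is nothing to prove.
\item Since the simplex over $\mathcal{X}\times\mathcal{X}\times\mathcal{Y}$ is compact, pass to a subsequence with $P_k\to P^*$.
\item By continuity of mutual information, $I_{P^*}(X;\bar X)=\lim_k I_{P_k}(X;\bar X)\le\lim_k R_k=0$, so $P^*_{X\bar X}=Q\times Q$.
\item By continuity of the type-dependent metric $q$, the constraint $q(P_{XY})\le q(P_{\bar XY})$ is closed, so it passes to the limit.
\item Hence $P^*$ is feasible for $G(Q)$.
\end{enumerate}
The objective is handled as follows. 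We have $I_{P_k}(X;\bar X)-R_k\ge -R_k\to0$. Moreover, $(P,P')\mapsto D(P\|P'\times W)$ is jointly lower semicontinuous, being a supremum of continuous functions or, directly, by the log-sum structure. Together these give
\begin{align}
\liminf_k F_Q(R_k)\ge D(P^*\|Q\times Q\times W)\ge G(Q).
\end{align}
Combined with Step 1, this shows $\lim_{R\downarrow0}F_Q(R)=G(Q)$.

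\textbf{Step 3 (exchange with $\max_Q$).} Fix any $Q'$. Then
\begin{align}
\liminf_{R\downarrow0}\max_Q F_Q(R)\ge \lim_{R\downarrow 0}F_{Q'}(R)=G(Q').
\end{align}
Taking the supremum over $Q'$ gives the matching lower bound $\max_Q G(Q)$. No uniformity in $Q$ is needed, because the upper bound of Step 1 is already uniform in $Q$.

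The main obstacle is Step 2. One must ensure that infinite divergences and the reference measure $P_{X\bar X}\times W$, which varies with $P$, do not break lower semicontinuity. One must also ensure that the metric constraint survives the limit. If $q$ were only lower or upper semicontinuous, the constraint set might fail to be closed, so the continuity of $q$ assumed in \cite[Thm.~1]{1056281} is exactly what is needed there.
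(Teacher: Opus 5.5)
Your proof is correct and has the same skeleton as the paper's: fix $Q$, show that the $R\downarrow 0$ limit equals the value at $R=0$ (where $I_P(X;\bar{X})=0$ with both marginals equal to $Q$ forces $P_{X\bar{X}}=Q\times Q$), then optimize over $Q$. The difference is that the paper cites a continuity remark from Csisz\'ar--K\"orner for the first step, whereas you prove the needed right-continuity at $R=0$ yourself (the product slice for the upper bound; compactness, continuity of $q$ and of mutual information, and lower semicontinuity of the divergence for the lower bound), and you also justify exchanging $\lim_{R\downarrow 0}$ with $\max_Q$, which the paper leaves implicit.
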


\begin{proof}
	Fix a distribution $Q$ and optimize over $Q$ at the end. For fixed $Q$, write
	\begin{align}\label{eq-fr-defn}
		\lim_{R \downarrow 0} E_{\mathrm{ex}}^{{q}}(R, Q, W)  
		= \lim_{R \downarrow 0} \,\, \min_{\substack{P_{X\bar{X}Y}: P_X=P_{\bar{X}}=Q\\ q(P_{XY})\leq q(P_{\bar{X}Y})\\I_P(X;\bar{X})\leq R}} \,\, D\big(P \| P_{X\bar{X}}\times W\big) -R.
	\end{align}
	A  continuity argument shows that $E_{\mathrm{ex}}^{q}(R,Q, W)$ is continuous in $R$ whenever $q(\cdot)$ is continuous in the type; see the remark in \cite[p.~8]{1056281}. 
	This condition is satisfied in our setting, since $q(\cdot)$ is defined in terms of mutual information, and mutual information is continuous.
	Therefore, the limit as $R \downarrow 0$ is simply $E_{\mathrm{ex}}^{q}(0, Q, W)$, and hence
	\begin{align}
		\lim_{R \downarrow 0} E_{\mathrm{ex}}^{{q}}(R, Q, W)  
		&= E_{\mathrm{ex}}^{{q}}(0, Q, W)  \\
		&=  \min_{\substack{P_{X\bar{X}Y}: P_X=P_{\bar{X}}=Q\\ q(P_{XY})\leq q(P_{\bar{X}Y})\\I_P(X;\bar{X})= 0}} \,\, D\big(P \| P_{X\bar{X}}\times W\big).
	\end{align}
	Observe that $I_P(X;\bar{X})=0$ together with $P_X=P_{\bar{X}}=Q$ implies $P_{X\bar{X}}=Q\times Q$.
	Substituting this identity into the minimization and then optimizing over $Q$ concludes the proof.
\end{proof}

\begin{lemma}\label{lem:symmetry}
	In the case of a binary symmetric channel with equiprobable inputs, the minimizing distribution in \eqref{eq:rate-zero-type} may be chosen symmetric without loss of generality. 
	In particular, the minimizing distribution $P$ in \eqref{eq:rate-zero-type} satisfies
	\begin{align}
		P(x, \bar{x}, y) = P(1-x,1-\bar{x},1-y), \qquad x,\bar{x},y \in \{0,1\}.
	\end{align}
\end{lemma}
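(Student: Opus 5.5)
The plan is a symmetrization argument: the problem is invariant under the joint bit-flip, and the objective is convex. Define the involution $\sigma(x,\bar x,y)=(1-x,1-\bar x,1-y)$ on $\{0,1\}^3$. For any feasible $P$ in \eqref{eq:rate-zero-type} (with $Q=Q_{\mathrm u}$), let $P^\sigma\triangleq P\circ\sigma$ and $\bar P\triangleq\tfrac12(P+P^\sigma)$. I will show three things in turn: $P^\sigma$ is feasible with the same objective; $\bar P$ is feasible; and $\bar P$ has objective no larger than that of $P$. Since $\bar P$ is $\sigma$-invariant by construction, applying this to any minimizer (one exists by compactness and continuity) yields a symmetric minimizer.

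\textbf{Invariance of $P^\sigma$.} The marginal constraint $P_{X\bar X}=Q_{\mathrm u}\times Q_{\mathrm u}$ is preserved, because the uniform product distribution is invariant under flipping both coordinates. The objective satisfies $D(P^\sigma\|Q_{\mathrm u}\times Q_{\mathrm u}\times W)=D(P\|Q_{\mathrm u}\times Q_{\mathrm u}\times W)$. This holds since the reference measure $Q_{\mathrm u}(x)Q_{\mathrm u}(\bar x)W(y|x)$ is $\sigma$-invariant, using the BSC symmetry $W(1-y|1-x)=W(y|x)$. For the decoding constraint, with $q=I$ the MMI metric, $P^\sigma_{XY}$ is obtained from $P_{XY}$ by relabelling both the $x$ and $y$ alphabets. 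Mutual information is invariant under such relabelling, so $I(P^\sigma_{XY})=I(P_{XY})$, and likewise for $(\bar X,Y)$. Hence the constraint $I(P_{XY})\le I(P_{\bar XY})$ is preserved.

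\textbf{Feasibility and objective of $\bar P$.} The marginal constraint is linear, so $\bar P$ satisfies it. The objective is convex in $P$, so $D(\bar P\|\cdot)\le\tfrac12 D(P\|\cdot)+\tfrac12 D(P^\sigma\|\cdot)=D(P\|\cdot)$.

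\textbf{The main obstacle} is the constraint $I(\bar P_{XY})\le I(\bar P_{\bar XY})$, because mutual information is not convex or concave in the joint distribution. The binary structure rescues this. Since $P_X=P_{\bar X}=Q_{\mathrm u}$ is uniform and symmetric, write $P_{XY}$ through the parameters $P(y|x)$. The mixture $\bar P_{XY}$ has uniform $X$ and conditional $\bar P(y|x)=\tfrac12\big(P(y|x)+P(1-y|1-x)\big)$, which is a BSC with crossover $\tfrac12\big(P(1|0)+P(0|1)\big)=P_{XY}(X\neq Y)$. Hence $I(\bar P_{XY})=\log2-h_2\big(P(X\neq Y)\big)$, and similarly $I(\bar P_{\bar XY})=\log 2-h_2\big(P(\bar X\neq Y)\big)$.

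Because the exact comparison of $I(P_{XY})$ with $I(P_{\bar XY})$ need not transfer to the symmetrized values, I would first try the following fallback. First restrict to the relevant region, where one shows that the minimizer must have the $(\bar X,Y)$ crossover at least as close to $0$ or $1$ as the $(X,Y)$ one. Alternatively, reparametrize: for binary uniform input, note that $I(P_{XY})\ge \log2-h_2(P(X\neq Y))$ by concavity of $h_2$. Then relax the minimization to the constraint $h_2(P(X\neq Y))\ge h_2(P(\bar X\neq Y))$, which is symmetric-closed because it depends on $P$ only through linear functionals invariant under $\sigma$. Next verify that the relaxed problem's minimizer, which is symmetric by the convexity argument, already satisfies the original MMI constraint; for symmetric $P$ the two constraints coincide. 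This reduces the lemma to checking that the relaxation does not lower the minimum. I expect that check to be the real work; it follows once the relaxed minimizer is symmetric and feasible for the original problem.
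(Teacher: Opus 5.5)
\textbf{There is a genuine gap.} Your symmetrization skeleton is the paper's: the flip, the average $\bar P$, invariance of the marginal constraint and the objective, and convexity of $D$. Your formula $I(\bar P_{XY})=\log 2-h_2(P(X\neq Y))$ is also correct. What is missing is the step you flag yourself: you never show that $I(P_{XY})\le I(P_{\bar XY})$ implies $I(\bar P_{XY})\le I(\bar P_{\bar XY})$, and the fallback does not supply it.

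Replacing \emph{both} sides of $I(P_{XY})\le I(P_{\bar XY})$ by lower bounds is neither a relaxation nor a restriction. Lowering the left side loosens the constraint, while lowering the right side tightens it. So nothing you wrote shows that every originally feasible $P$ satisfies $h_2(P(X\neq Y))\ge h_2(P(\bar X\neq Y))$. By your own formula for $I(\bar P_{XY})$, that containment is exactly the statement that symmetrization preserves feasibility, so the fallback is circular. Without it, the original minimum could be attained at a nonsymmetric $P$ outside your relaxed set, and the lemma does not follow. The ``restrict to the relevant region'' variant has the same problem: it restates the missing claim for the minimizer without proving it. As an aside, $I(P_{XY})\ge\log2-h_2(P(X\neq Y))$ follows from convexity of mutual information in the channel, not from concavity of $h_2$ alone, since $H(Y)\le\log 2$ points the wrong way.

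The missing idea, which is the content of the paper's proof, is short. Because $P_X=P_{\bar X}$ is uniform and $Y$ is shared, $I(P_{XY})\le I(P_{\bar XY})$ iff $H(Y|X)\ge H(Y|\bar X)$. Set $S=P(Y=0)$. Uniformity of $X$ forces $P(Y=0|X=0)=S+u$ and $P(Y=0|X=1)=S-u$, and likewise $S\pm v$ for $\bar X$. These give $P(X\neq Y)=\tfrac12-u$ and $P(\bar X\neq Y)=\tfrac12-v$. The constraint then reads $G(u)\ge G(v)$ with $G(t)=h_2(S+t)+h_2(S-t)$, which is even and concave, hence equivalent to $|u|\le|v|$ regardless of $S$. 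Symmetrization moves $S$ to $\tfrac12$ but leaves $u$ and $v$ unchanged, so feasibility is preserved. In particular, your ``relaxed'' constraint $h_2(\tfrac12-u)\ge h_2(\tfrac12-v)$ is \emph{equivalent} to the original one. Proving that equivalence is exactly what would have closed your argument.
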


\begin{proof}
	Let $Q$ be the uniform distribution on $\{0,1\}$ and let $W$ be a binary symmetric channel with crossover probability $\varepsilon$.
	Given any feasible $P$, define its flipped version $P^{{s}}$ by
	\begin{align}
		P^{{s}}(x,\bar{x},y)\triangleq P(1-x,\,1-\bar{x},\,1-y), \qquad x,\bar{x},y\in\{0,1\}.
	\end{align}
	This is simply a relabeling of $(X,\bar X,Y)$. Since both $Q\times Q$ and $W$ are invariant under bit flipping, it follows that
	\begin{align} 
		D\big(P \, \| \, Q\times Q \times W\big) = D\big(P^{{s}}\, \| \, Q\times Q\times W\big).
	\end{align}
	Now define the symmetrized distribution $\tilde{P}=\tfrac{1}{2}(P+P^{{s}})$.  
	By convexity of relative entropy in its first argument,
	\begin{align}
		D\!\left(\tilde{P} \, \| \, Q\times Q \times W\right) 
		&= D\!\left(\frac{1}{2}\big(P+P^{{s}}\big) \, \| \, Q\times Q \times W\right) \\
		&\leq \frac{1}{2}D\Big(P \, \| \, Q\times Q \times W\Big) + \frac{1}{2}D\Big(P^{{s}} \, \| \, Q\times Q \times W\Big)\\
		&=D\big(P \, \| \, Q\times Q \times W\big),
	\end{align}
	and hence symmetrization cannot increase the objective value.
	It remains to check whether $\tilde{P}$ is feasible.  
	Any feasible distribution $P$ satisfying $P_{X\bar{X}} = Q \times Q$ can be parametrized as
	\begin{align}
		P(0, 0, 0) = \frac{\alpha_1}{4}, \,\, P(0, 1, 0) = \frac{\alpha_2}{4}, \,\, P(1, 0, 0) = \frac{\alpha_3}{4}, \,\, P(1, 1, 0) = \frac{\alpha_4}{4}.
	\end{align}
	{for some parameters satisfying {$\alpha_1,\alpha_2,\alpha_3,\alpha_4 \in [0,1]$.}} 
	The constraint $I_P(X;Y) \le I_P(\bar{X};Y)$ is equivalent to
	$H_P(Y|X) \ge H_P(Y|\bar{X})$, which, in terms of $(\alpha_1,\alpha_2,\alpha_3,\alpha_4)$, becomes
	\begin{align}
		h_2\!\left(\frac{\alpha_1+\alpha_2}{2}\right) + h_2\!\left(\frac{\alpha_3+\alpha_4}{2}\right)
		\ge
		h_2\!\left(\frac{\alpha_1+\alpha_3}{2}\right) + h_2\!\left(\frac{\alpha_2+\alpha_4}{2}\right),
	\end{align}
	where $h_2(\cdot)$ denotes the binary entropy. 
	Define
	\begin{align}
		S&\triangleq \frac{1}{4}\Big(\alpha_1+\alpha_2+\alpha_3+\alpha_4\Big)\\
		u&\triangleq \frac{1}{4}\Big(\big(\alpha_1+\alpha_2\big)-\big(\alpha_3+\alpha_4\big)\Big)\\
		v& \triangleq \frac{1}{4}\Big(\big(\alpha_1+\alpha_3\big)-\big(\alpha_2+\alpha_4\big)\Big).
	\end{align}
	Then the constraint can be written as,
	\begin{align}
		&h_2\!\left(\frac{\alpha_1+\alpha_2}{2} \right)+h_2\!\left(\frac{\alpha_3+\alpha_4}{2} \right) = h_2(S + u) + h_2(S - u)\\
		&h_2\!\left(\frac{\alpha_1+\alpha_3}{2} \right)+h_2\!\left(\frac{\alpha_2+\alpha_4}{2} \right) = h_2(S + v) + h_2(S - v).
	\end{align}
	For fixed $S$, the function $G(t) \triangleq h_2(S+t)+h_2(S-t)$ is decreasing in $|t|$, and therefore the feasibility condition is equivalent to $|u| \le |v|$.
	For the symmetrized distribution $\tilde{P}$, we obtain
	\begin{align}
		I_{\tilde{P}}(X;Y) \le I_{\tilde{P}}(\bar{X};Y)
		\quad\Longleftrightarrow\quad
		h_2\!\left(\tfrac{1}{2}+u\right) \ge h_2\!\left(\tfrac{1}{2}+v\right),
	\end{align}
	which again holds if and only if $|u| \le |v|$.  
	Since this condition already holds for $P$, it also holds for $\tilde P$.
	Thus, for any feasible $P$, the symmetrized distribution $\tilde{P}$ remains feasible and attains an objective value no larger than that of $P$.
\end{proof}


\begin{theorem}
	For the binary symmetric channel with crossover probability $\varepsilon$, the rate-zero MMI expurgated exponent satisfies
	\begin{align}
		\lim_{R \downarrow 0}\, E_{\mathrm{ex}}^{\mathrm{mmi}}(R, W) = -\frac{1}{2}\log  2 \sqrt{\varepsilon(1-\varepsilon)},
	\end{align}
	and therefore coincides with the expurgated error exponent at rate zero (see \eqref{eq:ml-exp-rate-zero}).
\end{theorem}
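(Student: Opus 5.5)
The approach is to prove two inequalities. The upper bound costs nothing: $E_{\mathrm{ex}}^{\mathrm{mmi}}(R,W)$ is an achievable exponent, so it cannot exceed the reliability function of $W$. At rate zero the reliability function equals the expurgated exponent \cite{SHANNON1967522}, which for the BSC is $-\frac12\log 2\sqrt{\varepsilon(1-\varepsilon)}$ by \eqref{eq:ml-exp-rate-zero}. For the lower bound it suffices to fix $Q=Q_{\mathrm u}$ in Lemma~\ref{lm:rate-zero-type}, since the maximum over $Q$ is at least the value at $Q_{\mathrm u}$. It then remains to evaluate the inner minimum with the MMI constraint $I_P(X;Y)\le I_P(\bar X;Y)$ and show it equals the target.

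By Lemma~\ref{lem:symmetry}, I may restrict to symmetric $P$. In the parametrization of that lemma, symmetry forces $\alpha_4=1-\alpha_1$ and $\alpha_3=1-\alpha_2$. Substituting gives $u=(\alpha_1+\alpha_2-1)/2$ and $v=(\alpha_1-\alpha_2)/2$. The feasibility condition $|u|\le|v|$ from the proof of Lemma~\ref{lem:symmetry} becomes $(\alpha_1+\alpha_2-1)^2\le(\alpha_1-\alpha_2)^2$. Expanding, this is equivalent to
\begin{align}
(2\alpha_1-1)(2\alpha_2-1)\le 0,
\end{align}
so $\alpha_1$ and $\alpha_2$ must lie on opposite sides of $1/2$ (inclusive).

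Next I compute the objective. Write $d(a\Vert b)$ for the binary divergence. Given $(x,\bar x)=(0,0)$, the conditional law of $Y$ is $(\alpha_1,1-\alpha_1)$ against $W(\cdot|0)=(1-\varepsilon,\varepsilon)$. By symmetry the pair $(1,1)$ contributes the same term. Given $(0,1)$, the conditional law is $(\alpha_2,1-\alpha_2)$ against $(1-\varepsilon,\varepsilon)$. Given $(1,0)$, the probability of $Y=0$ is $1-\alpha_2$, compared against $W(0|1)=\varepsilon$, which is the same divergence. Each pair has weight $1/4$, hence
\begin{align}
D\big(P\,\|\,Q_{\mathrm u}\times Q_{\mathrm u}\times W\big)=\tfrac12\big[d(\alpha_1\Vert 1-\varepsilon)+d(\alpha_2\Vert 1-\varepsilon)\big].
\end{align}

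It remains to minimize this over the feasible region. The two terms decouple and each is convex in its argument with minimum at $1-\varepsilon>1/2$. The constraint forces one of $\alpha_1,\alpha_2$ to be at most $1/2$, while the other is unconstrained and can be set to $1-\varepsilon$ at zero cost. For the coordinate restricted to $[0,1/2]$, the divergence is decreasing on $[0,1-\varepsilon]$, so its minimum is attained at $1/2$. This gives
\begin{align}
d(\tfrac12\Vert1-\varepsilon)=-\tfrac12\log\big(4\varepsilon(1-\varepsilon)\big)=-\log 2\sqrt{\varepsilon(1-\varepsilon)}.
\end{align}
Multiplying by $\tfrac12$ yields exactly $-\frac12\log 2\sqrt{\varepsilon(1-\varepsilon)}$, which matches the upper bound.

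The main obstacle is the reduction itself, not the final calculus. One must check that the constraint in Lemma~\ref{lem:symmetry}, stated through $h_2$, really reduces to $|u|\le|v|$ in the symmetric case. The sign bookkeeping in the factorization $(2\alpha_1-1)(2\alpha_2-1)\le0$ must also be correct. Both steps follow directly from the algebra above.
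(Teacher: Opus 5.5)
Your proof is correct and follows the paper's argument. You restrict to symmetric $P$ via Lemma~\ref{lem:symmetry}, reduce the MMI constraint to $(2\alpha_1-1)(2\alpha_2-1)\le 0$, write the objective as $\tfrac12[d(\alpha_1\Vert 1-\varepsilon)+d(\alpha_2\Vert 1-\varepsilon)]$, and minimize with the constrained coordinate at $1/2$. Your extra upper bound (achievable exponent $\le$ reliability function, which equals the expurgated exponent at rate zero by Shannon--Gallager--Berlekamp) soundly handles the $\max_Q$ that the paper leaves implicit by fixing the uniform input; you should also add, as the paper does, that the case $\varepsilon>1/2$ follows by symmetry, since your minimization step assumes $1-\varepsilon>1/2$.
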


\begin{proof} 
	By Lemma \ref{lem:symmetry}, we may restrict attention to symmetric distributions of the form
	\begin{align}
		P(0, 0, 0) = \frac{\gamma_1}{4}, \,\, P(0, 1, 0) = \frac{\gamma_2}{4}, \,\, P(1, 0, 0) = \frac{1-\gamma_2}{4}, \,\, P(1, 1, 0) = \frac{1-\gamma_1}{4},
	\end{align}
	for some $\gamma_1,\gamma_2 \in [0,1]$. 
	For such $P$,
	\begin{align}
		D\big(P \, \| \, Q\times Q \times W\big)
		&=  \frac{d(\gamma_1 \,\|\, 1-\varepsilon) + d(\gamma_2 \,\|\, 1-\varepsilon)}{2},
	\end{align}
	where $d(\cdot\|\cdot)$ denotes the binary relative entropy.
	Next, the feasibility constraint becomes
	\begin{align}
		 I_{{P}}\big(X; Y\big) \leq I_{{P}}\big(\bar{X} ; Y\big)
		&\Longleftrightarrow h_2\!\left(\frac{\gamma_1+\gamma_2}{2} \right) \geq  h_2\!\left(\frac{1+\gamma_1-\gamma_2}{2} \right)\\
		&\hspace{0cm}\Longleftrightarrow h_2\!\left(\frac{1}{2} + \frac{\gamma_1+\gamma_2-1}{2} \right) \geq  h_2\!\left(\frac{1}{2}+\frac{\gamma_1-\gamma_2}{2} \right)\\
		&\hspace{0cm}\Longleftrightarrow \left|\frac{\gamma_1+\gamma_2-1}{2}\right|  \leq \left|\frac{\gamma_1-\gamma_2}{2}\right| .
	\end{align}
	This condition can be rewritten as
	\begin{align}
		 \left|\frac{\gamma_1+\gamma_2-1}{2}\right|  \leq  \left|\frac{\gamma_1-\gamma_2}{2}\right| 
		&\Longleftrightarrow \big(2\gamma_1-1\big)\big(2\gamma_2-1\big) \leq 0   .
	\end{align}
	Thus, up to swapping the roles of $(\gamma_1,\gamma_2)$, we may assume $0 \leq \gamma_1 \leq \tfrac{1}{2} \leq \gamma_2 \leq 1$.
	Hence,
	\begin{align}
		\lim_{R \downarrow 0}\, E_{\mathrm{ex}}^{\mathrm{mmi}}(R, W)
		&= \min_{0 \leq \gamma_1 \leq \frac{1}{2} \leq \gamma_2 \leq 1} \frac{d(\gamma_1 \| 1-\varepsilon) + d(\gamma_2 \| 1-\varepsilon)}{2},
	\end{align}
	Assuming without loss of generality that $\varepsilon \leq \tfrac{1}{2}$, this reduces to
	\begin{align}
		\lim_{R \downarrow 0}\, E_{\mathrm{ex}}^{\mathrm{mmi}}(R, W)
		&= \frac{1}{2}\left(\left[\, \min_{\gamma_1 \leq \frac{1}{2}}\,\, d(\gamma_1 \| 1-\varepsilon)\right] + \left[\, \min_{\gamma_2 \geq \frac{1}{2}} \,\, d(\gamma_2 \| 1-\varepsilon)\right]\right)\\
		&= \, \min_{\gamma_1 \leq \frac{1}{2}} \,\, \frac{d(\gamma_1 \| 1- \varepsilon)}{2} \label{eq:temp-11} \\
		&=  \frac{d\big(\frac{1}{2} \, \| \, \varepsilon\big)}{2}\\
		&= -\frac{1}{2} \log \left( 2\sqrt{\varepsilon(1-\varepsilon)}\right).
	\end{align}
	The case $\varepsilon\ge \tfrac12$ follows by symmetry.
\end{proof}

\end{appendices}



\ifCLASSOPTIONcaptionsoff
  \newpage
\fi

{\footnotesize
\bibliographystyle{IEEEtran}
\bibliography{refs}
}

%
%
%
 
\end{document}